\DeclareMathOperator{\KS}{\mathit{C}}
\newcommand{\cnd}{\,|\,}
\newtheorem{thm}{Theorem}
\newtheorem{prop}[thm]{Proposition}
\newtheorem{lem}{Lemma}
\begin{document}

\title{Topological arguments for\\ Kolmogorov complexity}
\author{Andrei Romashchenko and Alexander Shen\thanks{Both authors work at the LIRMM, CNRS \& University of Montpellier 2 and are on leave from IITP RAS, Moscow. Supported in part by RFBR 14-01-93107 grant.}}
\date{}
\maketitle

\begin{abstract}

We present several applications of simple topological arguments (such as non-con\-tra\-cti\-bi\-li\-ty of a sphere and similar results) to Kolmogorov complexity. It turns out that discrete versions of these results can be used to prove the existence of strings with prescribed complexity with $O(1)$-precision (instead of usual $O(\log n)$-precision).

In particular, we improve an earlier result of M.~Vyugin and show that for every $n$ and for every string $x$ of complexity at least $n+O(\log n)$ there exists a string $y$ such that both $\KS(x\cnd y)$ and $\KS(y\cnd x)$ are equal to $n+O(1)$.
We also show that for a given tuple of strings $x_i$ (assuming they are almost independent) there exists another string $y$
such that the condition $y$ makes the complexities of all $x_i$ twice smaller with $O(1)$-precision.

The extended abstract of this paper was published in~\cite{dcm}.
\end{abstract}

\section{Introduction}

In this paper we discuss several applications of topological arguments in algorithmic information theory. The main notion of algorithmic information theory, the Kolmogorov (or algorithmic) complexity has a very simple definition\footnote{The plain Kolmogorov complexity $\KS(x)$ of a bit string $x$ is defined as the minimal length of a program that outputs $x$; this definition depends on the choice of a programming language, and we fix some language that makes the complexity minimal. The conditional complexity $\KS(x\cnd y)$ is defined as the minimal length of a program that transforms $y$ to $x$. The sum $\KS(y)+\KS(x\cnd y)$ is equal to the complexity $\KS(x,y)$ of (the encoding of) the pair $(x,y)$ with logarithmic precision. The mutual information between $x$ and $y$ is defined as $\KS(x)-\KS(x\cnd y)$, or $\KS(y)-\KS(y\cnd x)$, or as $\KS(x)+\KS(y)-\KS(x,y)$; all three quantities coincide with each other with logarithmic precision. Strings $x$ and $y$ are considered as  ``independent'' if $I(x:y)$ is negligible (of course, one has to specify the exact bound when formulating results about independent strings).  The information distance between strings $x$ and $y$ is defined as $\KS(x\cnd y)+\KS(y\cnd x)$; it satisfies the triangle inequality with logarithmic precision. 

We assume that the reader is familiar with all these notions. An introduction to Kolmgorov complexity can be found, e.g., in~\cite{gacsnotes,uppsalanotes,usv}; see also an extensive textbook on this subject written by Li and Vit\'anyi~\cite{li-vitanyi}.}, but even some basic questions about it turn out to be very difficult. For example, we still do not know which linear inequalities are true for the Kolmogorov complexities of tuples of strings, though we know that they are the same as the inequalities for Shannon entropies and also for the sizes of subgroups of a finite group and their intersections (see \cite[Chapter 10]{usv}).

Things could become even more complicated when we switch from universal statements (e.g., inequalities) to universal-existential statements. An example of such a statement: \emph{for every integer $n$ and for every $x$ of complexity at least $2n+O(1)$ there exists a string $y$ such that both $\KS(x\cnd y)$ and $\KS(y\cnd x)$ are equal to $n$ with $O(1)$-precision}. This result was proved by M.~Vyugin~\cite{mvyugin} using game technique developed by An.~Muchnik (see \cite{mmsv,msv} for some other examples of game arguments). Several other techniques to prove the existence of strings with given complexity properties were developed: one may use probabilistic method, and also some combinatorial techniques, in particular, on-line matching and expander-like constructions, see~\cite[Chapters 11, 12, 14]{usv} for examples. Still some basic questions remain open. For example, we do not know whether for every strings $a_1,\ldots,a_n$ there exists some string $b$ that, being used as a condition, makes the complexity of each $a_i$ twice smaller, i.e., $\KS(a_i\cnd b)=0.5\KS(a_i)+O(1)$ for every $i$.

In this paper we suggest one more technique which is quite different from the tools used before. It is applicable (as it seems) only in some rather special situations; however, when applicable, it provides results with maximal possible $O(1)$-precision. This approach is based on simple topological arguments.  To explain what is meant by ``topological arguments'', let us start with a very simple example. Let us consider a string $x$ that has Kolmogorov complexity $n$ (we consider plain complexity $\KS(x)$, but this does not matter for now). We want to prove that there exists a string $y$ such that $\KS(x\cnd y)\approx n/2$. Let us start with $y=x$, when $\KS(x\cnd y)\approx 0$. Then, we delete bits of the string $y$ (say, at the end) one by one until we get $y=\Lambda$ (the empty string) and $\KS(x\cnd y)\approx n$. During the process, each time when one  last bit of $y$ is deleted, the conditional complexity $\KS(x\cnd y)$ changes by at most $O(1)$, so it cannot cross the threshold $n/2$ without visiting $O(1)$-neighborhood of $n/2$.

Topological arguments of this type can be used in two (and more) dimensions, though they become less trivial. In what follows we provide several examples of this type.

\section{Vyugin's result and its extensions}\label{sec:vyugin-extension}

As we have said, M.~Vyugin~\cite{mvyugin} showed that for every $n$ and for every string $x$ with $\KS(x)\ge 2n+O(1)$ there exists a string $y$ such that both conditional complexities $\KS(x\cnd y)$ and $\KS(y\cnd x)$ are equal to $n+O(1)$. One may say informally that $y$ is ``$n$ bits apart from $x$ in both directions''. 

This result was proved in~\cite{mvyugin} using a rather ingenious game argument. As we shall see in this paper, the condition $\KS(x)\ge 2n$ is much stronger than necessary; it is enough to assume only that $\KS(x)\ge n+c\log n$ for some $c$. We present a  proof of this theorem based on a simple topological argument. This topological proof can be applied unless $\KS(x)$ is very large  (Vyugin's game argument still seems to be necessary if $\KS(x)$ is really big compared to $n$.)

Similar reasoning also allows us to construct $y$ such that both complexities $\KS(x\cnd y)$ and $\KS(y\cnd x)$ have prescribed values with $O(1)$-precision, even if those two values are different. (This question was discussed in Vyugin's paper~\cite{mvyugin}, but no positive result of this type was given there except for the already mentioned case $m=n+O(1)$.) Again we need some restrictions that guarantee that prescribed values are not unreasonable large or small. Here is the exact statement.

\begin{thm}\label{thm:distance}
   Let $P$ be some polynomial. There exists a constant $c$ such that for every string $x$ and for all integers $m,n$ such that
   \begin{itemize}
     \item $n+c\log n \le \KS(x) \le P(n)$,
     \item $c\log n \le m \le P(n)$,
   \end{itemize}
there exists a string $y$ such that $|\KS(x\cnd y)-n|\le c$ and $|\KS(y\cnd x)-m|\le c$.
\end{thm}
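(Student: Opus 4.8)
The plan is to realise the required $y$ as a member of an explicit two-parameter family of candidate strings, and then to single out the right member by a two-dimensional version of the ``intermediate value'' walk used in the introduction: instead of sliding along an interval we slide over a rectangle, and instead of crossing a threshold we are forced to hit a point --- which is where simple connectedness of the disk enters.

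First I would set up the family. Fix a shortest program $p$ for $x$, so $|p| = k := \KS(x)$, and fix a string $r$ of length $B := 2P(n)$ with $\KS(r|x,p) \ge B - O(1)$ (such $r$ exists by counting). For $0 \le i \le k$ let $p_i$ be the $i$-bit prefix of $p$, for $0 \le j \le B$ let $r_j$ be the $j$-bit prefix of $r$, and put $y_{i,j} := \langle p_i, r_j \rangle$. Using that shortest programs and incompressible strings (and their prefixes) are incompressible, together with the Kolmogorov--Levin chain rule, one verifies
\begin{align*}
  \KS(x|y_{i,j}) &= (k-i) + O(\log n),\\
  \KS(y_{i,j}|x) &= j + O(\log n),
\end{align*}
with absolute hidden constants --- here the hypotheses $\KS(x) \le P(n)$ and $m \le P(n)$ are exactly what let us bound every logarithmic term by $O(\log n)$. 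Moreover, moving to a grid-neighbour changes $y_{i,j}$ by one bit, hence changes each of these two complexities by only $O(1)$.

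Now the topological step. Let $F$ send the integer rectangle $[0,k] \times [0,B]$ into $\mathbb{Z}^2$ by $F(i,j) = \bigl(\KS(x|y_{i,j}),\, \KS(y_{i,j}|x)\bigr)$; so $F(i,j)$ lies within $O(\log n)$ of $(k-i,\,j)$, and $F$ moves by only $O(1)$ between neighbours. I claim some grid point is sent within distance $c$ of $t := (n,m)$. Suppose not; then every grid value is at distance $> c$ from $t$, so linear interpolation over a triangulation of the solid rectangle extends $F$ to a continuous map $\bar F$ that avoids $t$ altogether, provided $c$ exceeds a fixed multiple of the $O(1)$ neighbour bound. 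On the four sides of the rectangle the gap hypotheses take over: since $k \ge n + c\log n$ the side $i=0$ has $\KS(x|y_{0,j}) \ge k - O(\log n) > n$; the side $i=k$ has $\KS(x|y_{k,j}) = O(1) < n$; since $m \ge c\log n$ the side $j=0$ has $\KS(y_{i,0}|x) = O(\log n) < m$; and the side $j=B$ has $\KS(y_{i,B}|x) = 2P(n) + O(\log n) > m$ (all once $c$ and $n$ exceed absolute constants). So $\bar F$ carries the four sides into the four open half-planes lying respectively to the right of, to the left of, below, and above $t$, met in cyclic order; hence $\bar F$ restricted to the boundary of the rectangle is a loop winding once around $t$, in particular not null-homotopic in $\mathbb{R}^2 \setminus \{t\}$. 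But the rectangle is a disk, so that restriction \emph{is} null-homotopic --- contradiction. Therefore some $y_{i,j}$ satisfies $|\KS(x|y_{i,j}) - n| \le c$ and $|\KS(y_{i,j}|x) - m| \le c$, and it is our $y$. (The finitely many values of $n$ below the absolute threshold are disposed of separately after enlarging $c$: there $\KS(x)$ and $m$ are bounded, and, say, $y = \langle x, s\rangle$ with $s$ a length-$m$ string incompressible given $x$ already works.)

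The step I expect to be the main obstacle is making this argument genuinely two-dimensional: one cannot fix the two complexities by two successive one-dimensional walks, since inserting prefixes of $p$ to adjust $\KS(x|y)$ perturbs $\KS(y|x)$ by $O(\log n)$ and conversely, so sequential corrections never reach $O(1)$ precision. It is the global topological fact --- a disk maps into a punctured plane only with a null-homotopic boundary loop --- that absorbs the $O(\log n)$ slop in the individual complexity estimates while still pinning the target to within $O(1)$. Accordingly the two things to carry out with care are (a) the logarithmic bookkeeping behind the two displayed estimates, uniformly in $x,m,n$; and (b) the check that, despite that slop, the four sides of the parameter rectangle really do wrap around $(n,m)$ --- which is exactly what the hypotheses $\KS(x) \ge n + c\log n$ and $m \ge c\log n$ buy.
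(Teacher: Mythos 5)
Your proposal is correct and follows essentially the same route as the paper: $y$ ranges over pairs of prefixes of a shortest program for $x$ and of an incompressible string independent of it, the map to $(\KS(x|y),\KS(y|x))$ is $O(1)$-Lipschitz and stays within $O(\log n)$ of the parameters, and simple connectedness of the parameter rectangle forces some grid point to land $O(1)$-close to $(n,m)$. The only differences are cosmetic: you take the auxiliary incompressible string of length $2P(n)$ where the paper takes length $2m+O(1)$, and you formalize the winding step via a Poincar\'e--Miranda-style four-half-planes argument instead of tracing the image of the boundary path around the rectangle explicitly.
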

\noindent
\emph{Remark:} Let us comment on the assumptions that appear in this statement. We want to obtain $|\KS(x\cnd y)-n|\le c$. Since $\KS(x\cnd y)$ does not exceed $\KS(x)$, we need to assume that $\KS(x)$ is at least~$n$. The assumption in the theorem is a bit stronger: we assume a ``safety margin'' of logarithmic size and require $\KS(x)\ge n+c\log n $. We also require $m$ and $\KS(x)$ to be polynomially bounded.

\begin{proof}
Let $p$ be the shortest description of $x$; it is a string of length $\KS(x)$. Consider also an incompressible string $z$ of length slightly greater than $m$, e.g., let $|z|$ and $\KS(z)$ be equal to $2m+O(1)$. Moreover, we take $z$ independent from $p$, so $\KS(z\cnd p)$ is also $2m+O(1)$. (A randomly chosen $z$ has these properties with probability close to $1$, i.e., most strings of that length have the required properties.)

The string $y$ is then constructed as (the encoding of) a pair $(p', z')$ where $p'$ and $z'$ are prefixes of $p$ and $z$ respectively; it remains to decide how long $p'$ and $z'$ should be. In other words, we have two parameters, $|p'|$ and $|z'|$, and the space of the parameters is a rectangle of width $\KS(x)$ and height $2m$. Each point $(|p'|,|z'|)$ in this rectangle determines $y=(p',z')$. So we can map each point  $(|p'|,|z'|)$  to the pair of integers $(\KS(x\cnd y),\KS(y\cnd x))$. We need to show that some point is mapped to a point that is $O(1)$-close to $(n,m)$.

\begin{figure}[ht]
\begin{center}
\ifpdf
 \includegraphics[width=0.9\textwidth]{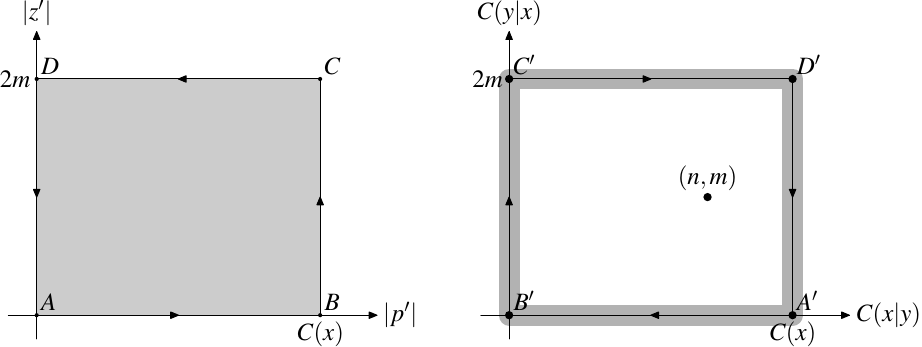}
\else
 \includegraphics[width=0.9\textwidth]{topology-1.mps}
\fi
\end{center}
\caption{Each pair in the left rectangle determines $y=(p',z')$ and is mapped into a pair $\KS(x\cnd y),\KS(y\cnd x)$ in the right one.}\label{fig1}
\end{figure}

To show this, we note that the mapping is ``continuous'' in the sense that neighbor points in the domain (on the left, Fig.~\ref{fig1}) are mapped into points at distance $O(1)$ (on the right, Fig.~\ref{fig1}).\footnote{A better name would be \emph{Lipschitz continuity}:  if the distance between the images of neighbor grid points is bounded by $c$, then the distance between the images of arbitrary two points is at most $c$ times bigger than the distance between the points itself (the distance is measured in $l_1$-sense)}
Indeed, $\KS(u\cnd v)$ changes only by $O(1)$ if $u$ or $v$ is changed by adding or deleting the last bit. Consider a path $A$--$B$--$C$--$D$--$A$ that goes counterclockwise around the rectangle on the left; as we shall see, the image path on the right will go clockwise (with logarithmic precision) around the rectangle and makes one turn around the point $(n,m)$. Then we can continuously transform the path on the left into one point (since the rectangle is simply connected); if its image on the right never comes close to $(n,m)$, we get a contradiction: the number of turns around $(n,m)$ cannot change if the image does not come close to that point.

This type of arguments is discussed later in Section~\ref{sec:top}, but for now an informal explanation could be sufficient. Imagine a fence on the left picture whose posts are the integer points on the path $A$--$B$--$C$--$D$--$A$, and the ``image fence'' on the right whose posts are images of the posts on the left. Both fences are polygonal closed curves whose vertices are posts; the distance between neighbor posts is $1$ on the left and $O(1)$ on the right. When we move one post on the left to the neighbour grid point, the fence on the left changes slightly, and the image fence also changes slightly (the image of the moving post changes its position by $O(1)$). In several steps we can shrink the fence on the left completely (moving all post to the same point); then the image fence also shrinks. But it is not possible if the posts on the right never come close to some point $(n,m)$ that initially was inside the fence.

It remains to look closely at the path around the rectangle and its image; we need to check that indeed it makes a turn around $(n,m)$. Note that our assumptions guarantee that $\log \KS(x)=\Theta(\log n)$, so we write just $O(\log)$ having in mind $O(\log n)$ or $O(\log\KS(x))$; note also that $O(\log m)\le O(\log)$.

\begin{itemize}
\item Point $A$: here $y=(\Lambda,\Lambda)$, so $\KS(y\cnd x)=O(1)$ and $\KS(x\cnd y)=\KS(x)+O(1)$. Thus, the image is $A'=(\KS(x),0)$ with $O(1)$-precision.
\item Edge $A$--$B$: here $y=(p',\Lambda)$. Then $\KS(y\cnd x)=O(\log)$ since $\KS(p\cnd x)=O(\log\KS(x))$ [the conditional complexity of a shortest description of $x$ given $x$ is $O(\log\KS(x))$], and the length of $p'$ can be described by $O(\log\KS(x))$ bits, too. And $\KS(x\cnd y)$ is somewhere between $0$ and $\KS(x)+O(1)$. So the image of $A$--$B$ is in the $O(\log)$-neighborhood of $A'$--$B'$.
\item Point $B$: here $y=(p,\Lambda)$, so $\KS(y\cnd x)=O(\log\KS(x))$ and $\KS(x\cnd y)=O(1)$; the image is $B'=(0,0)$ with $O(\log)$-precision; the edge $A$--$B$ is mapped into a path along $A'B'$ going from $A'$ to $B'$ with $O(\log)$-precision.
\item Edge $B$--$C$: here $y=(p,z')$, so $\KS(y\cnd x)$ is somewhere between $0$ and $2m+O(\log)$ [recall that the length of $z'$ is between $0$ and $2m$ and can be described by $O(\log m)=O(\log)$ bits; $\KS(p\cnd x)$ is also $O(\log)$]. On the other hand, $\KS(x\cnd y)=O(1)$, since $p$ determines~$x$.
\item Point $C$: here $y=(p,z)$, so $\KS(x\cnd y)=O(1)$ and $\KS(y\cnd x)=O(\log)+\KS(z\cnd x)=2m+O(\log)$. So the image is $C'=(0,2m)$ with $O(\log)$-precision.
\item Edge $C$--$D$: here $y=(p',z)$, so $\KS(y\cnd x)=2m+O(\log)$ and $\KS(x\cnd y)$ is between $0$ and $\KS(x)+O(\log)$.
\item Point $D$: here $y=(\Lambda,z)$, so $\KS(y\cnd x)=\KS(z\cnd x)+O(1)=2m+O(1)$ and $\KS(x\cnd y)=\KS(x\cnd z)=\KS(x)+O(\log)$ since $x$ and $z$ have only logarithmic mutual information. So the image of $D$ is $D'=(\KS(x),2m)$ with $O(\log)$-precision.
\item Edge $D$--$A$: here $y=(\Lambda,z')$, so $\KS(y\cnd x)$ is between $0$ and $2m+O(\log)$, and $\KS(x\cnd y)$ is $\KS(x)+O(\log)$ (note that $z'$ can have only $O(\log)$ bits of additional information compared to $z$).
\end{itemize}

This analysis shows that the path on the right follows the trajectory $A'$--$B'$--$C'$--$D'$--$A'$ with $O(\log)$-precision and therefore turns around the point $(n,m)$ if this point is $O(\log)$-far from the boundary of the rectangle, and this is exactly what our assumption guarantees.  Theorem~\ref{thm:distance} is proven.
\end{proof}

\section{Topological digression}\label{sec:top}

Let us look more closely at the topological arguments we used. First let us recall the following basic topological result: \emph{the circle is not contractible}. This means that the identity mapping $\textrm{id}\colon S^1\to S^1$ of a circle cannot be extended to a continuous mapping of a disc $D^2\to S^1$. Another version of essentially the same result: the identity mapping of a circle $S^1$ is not homotopic (cannot be continuously transformed into) a constant mapping.

How these results are proved usually? The most intuitive argument uses the winding number, an integer representing the total number of times that curve travels counterclockwise around the point. For an identity mapping the winding number is $1$, and  for the constant mapping it is zero; therefore, these two curves are not homotopic (the continuous change in the curve should change the winding number continuously; since it is an integer, it does not change).

We can consider any point $z$ inside the circle and consider the winding number with respect to $z$. The same argument then can be applied not only to mappings $D^2\to S^1$ but also to mappings $D^2\to D^2$ that do not cover $z$, since the winding number is well defined for those curves. In this way we get the following statement: \emph{every mapping $F\colon D^2\to D^2$ that extends the identity mapping of the circle, covers the entire disk}.  In fact,  we do not need that $F(z)$ is exactly $z$ for $z\in S^1$; if $F(z)$ is $\varepsilon$-close to $z$ for $z\in S^1$, then (for the same reasons) the image of $F$ covers the entire $D^2$ except (maybe) the $\varepsilon$-zone near the boundary.

In Section~\ref{sec:vyugin-extension} we used the discrete version of this argument (with a rectangle grid instead of a circle). One can define a winding number of a discrete path (sequence of points, or vertices of a polygonal line) assuming that vertices do not come close to the center point; this number is an integer and does not change if the vertices of the path are moved slightly (the change of the position of each point should be a small fraction of its distance to the center point).

More generally, many topological existence theorems have natural finite versions. For example, the famous Brouwer fixed-point theorem says that every continuous mapping $F\colon I^n\to I^n$ of a $n$-dimensional cube into itself has a fixed point. (One can speak about a disk or a simplex instead of a cube; all these bodies are homeomorphic.) The finite version replaces the cube by a rectangular grid $G^n=\{0,1,\ldots,N\}^n$; if a mapping $F\colon G^n\to G^n$ is $c$-Lipschitz (increases distances at most $c$ times), then there exists a point $x\in G^n$ such that $F(x)$ is $O(c)$-close to $x$.

Note that one can reduce this discrete version of the fixed-point theorem to the standard continuous version by extending the mapping from the grid to a continuous mapping of the entire cube in $\mathbb{R}^n$ to $\mathbb{R}^n$; it can be done in a piecewise-affine way, so that the image of each unit cube belongs to the convex hull of the images of its vertices. Then we take a fixed point of the continuous mapping and use $c$-Lipschitz property to show that neighbor grid point changes its position only by $O(1)$.

However, this reduction (from a discrete version of the fixed-point theorem to a continuous one) is a bit silly since the standard (elementary) proof of Brouwer's fixed point theorem reduces it to the discrete version, and the discrete statement is proved using Sperner's lemma. Anyway, we will use in the sequel several discrete version of elementary topological results and hope that the reader can easily reconstruct their proofs.

\section{Decreasing complexity by adding a condition}\label{sec:decreasing}

Let $a$ and $b$ be two strings. They have some complexities $\KS(a)$ and $\KS(b)$. If a third string $y$ is given, we can consider the conditional complexities $\KS(a\cnd y)$ and $\KS(b\cnd y)$ which are (in general) smaller that $\KS(a)$ and $\KS(b)$. Now the question: can we describe the pairs $(\KS(a\cnd y),\KS(b\cnd y))$ that can be obtained by choosing an appropriate value of~$y$? We answer this question for the case when $a$ and $b$ have negligible mutual information. In this case the answer is simple: we can get an arbitrary pair  $(\alpha,\beta)$ such that $0\le\alpha\le \KS(a)$ and $0\le\beta\le\KS(b)$ and $\alpha,\beta$ are not too close to the endpoints of the corresponding intervals (the distance is big compared to the logarithms of complexities and to the mutual information).

\begin{thm}\label{thm2}
 For some constant $c$ the following statement holds: for every two strings $a,b$ and for every integers $\alpha,\beta$ such that
 \begin{itemize}
 \item $\alpha,\beta \ge c(\log\KS(a)+\log\KS(b)+I(a:b))$;
 \item $\alpha \le \KS(a) - c(\log\KS(a)+\log\KS(b)+I(a:b))$;
  \item $\beta \le \KS(b) - c(\log\KS(a)+\log\KS(b)+I(a:b))$,
 \end{itemize}
 there exists a string $y$ such that $|\KS(a\cnd y)-\alpha|\le c$ and $|\KS(b\cnd y)-\beta|\le c$.
\end{thm}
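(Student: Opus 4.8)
The plan is to imitate the two-dimensional topological argument used for Theorem~\ref{thm:distance}, but now with \emph{two} independent "padding" strings instead of one. Fix shortest descriptions, or rather use the strings themselves together with a random pad. Concretely, let $u$ be an incompressible string of length and complexity $2\KS(a)+O(1)$ that is independent of $a$ (so $\KS(u\mid a)=2\KS(a)+O(1)$), and similarly let $v$ be an incompressible string of length and complexity $2\KS(b)+O(1)$ independent of $b$; we may also demand that the triple $(a,b)$ together with $(u,v)$ has only logarithmic mutual information overall, which holds for a random choice of $(u,v)$ with positive probability. The candidate oracle will be $y=(a',\,u',\,b',\,v')$, an encoding of a quadruple of prefixes of $a,u,b,v$ respectively. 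This looks like a four-dimensional parameter space, but the point is that the two coordinates $(\KS(a\mid y),\KS(b\mid y))$ essentially decouple: prefixes of $a,u$ control the first coordinate and prefixes of $b,v$ control the second, up to $O(\log)$ terms. So it suffices to run two independent one-dimensional (interval) arguments of the kind sketched in the introduction.

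The key steps, in order. First, fix the $b$-side prefixes at their "neutral" values (say $b'=b$ and $v'=\Lambda$, which forces $\KS(b\mid y)=O(\log)$) and vary only $(a',u')$ along a path in the rectangle $[0,|a|]\times[0,|u|]$; as in the proof of Theorem~\ref{thm:distance}, going counterclockwise around that rectangle, the value $\KS(a\mid y)$ visits, with $O(\log)$-precision, the endpoints $\KS(a)+O(1)$ (at $(\Lambda,\Lambda)$), then $O(1)$ (at $(a,\Lambda)$), stays $O(1)$ along the edge to $(a,u)$, and then returns to $\KS(a)$ (at $(\Lambda,u)$, using that $a$ and $u$ are independent). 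Hence by the one-dimensional intermediate-value argument there is a choice of $(a',u')$ with $\KS(a\mid y)=\alpha+O(1)$, uniformly in whatever legal values the $b$-side prefixes take. Second, by symmetry, for fixed $a$-side prefixes there is a choice of $(b',v')$ with $\KS(b\mid y)=\beta+O(1)$. Third, combine: consider the product of the two one-parameter families, giving a map from a square into the plane; the first coordinate depends "monotonically enough" on the $a$-parameter and is nearly constant in the $b$-parameter (it shifts by only $O(\log)$ when we touch the $b$-side prefixes, since those contribute only their lengths, $O(\log)$ bits, plus $\KS(b\mid a,u,\ldots)$ which is $O(\log)$ by independence), and symmetrically for the second coordinate. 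A discrete two-dimensional fixed-point / degree argument (exactly the "turning around a point" argument already used above) then yields a point mapped to $(\alpha,\beta)$ with $O(\log)$-precision, and the hypotheses $\alpha,\beta\ge c(\log\KS(a)+\log\KS(b)+I(a:b))$ and $\alpha\le\KS(a)-c(\cdots)$, $\beta\le\KS(b)-c(\cdots)$ guarantee that $(\alpha,\beta)$ is $O(\log)$-far from the boundary so that the image loop genuinely winds around it.

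The main obstacle I expect is making the "decoupling" rigorous: I must check that adding the $b$-side prefixes to $y$ changes $\KS(a\mid y)$ by only $O(\log)$, and vice versa. This is where the independence assumption — and the appearance of the mutual information term $I(a:b)$ in the hypotheses — is essential: $\KS(a\mid a',u',b',v')$ differs from $\KS(a\mid a',u')$ by at most the information that $(b',v')$ carries about $a$ given $(a',u')$, which is bounded by $I(a:b)+I(a:u)+I(a:v)+O(\log)=I(a:b)+O(\log)$ for our random choice of pads. One also has to be careful that the padding length $2\KS(a)$ is large enough that $\KS(a\mid y)$ really does sweep down from $\KS(a)$ to $O(1)$ as $u'$ grows (a random string of length $2\KS(a)$ independent of $a$ has this property), and to handle the $O(\log)$ cost of specifying the four prefix lengths inside $y$ — all of which are absorbed into the constant $c$ multiplying $\log\KS(a)+\log\KS(b)+I(a:b)$. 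The rest is the same discrete topology (simple connectedness of the square, or equivalently a degree computation for a Lipschitz map) already invoked in the proof of Theorem~\ref{thm:distance}.
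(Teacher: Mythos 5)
Your argument is sound and rests on exactly the same topological core as the paper's proof (a Lipschitz map of a rectangle whose boundary image winds once around the target point, plus simple connectedness), but your parameterization is different and more complicated than what the paper does. The paper simply takes $y=(p',q')$ where $p',q'$ are prefixes of \emph{shortest descriptions} $p,q$ of $a,b$; then the map $(u,v)\mapsto(\KS(a|y(u,v)),\KS(b|y(u,v)))$ on $[0,\KS(a)]\times[0,\KS(b)]$ is within $O(\log\KS(a)+\log\KS(b)+I(a:b))$ of the \emph{identity}, and the covering of the deep interior follows immediately from the winding argument --- no random pads and no decoupling analysis are needed. The pads $u,v$ in your construction are in fact superfluous: unlike Theorem~\ref{thm:distance}, where the second coordinate $\KS(y|x)$ forces one to inject fresh randomness into $y$, here both coordinates are complexities of the fixed strings $a,b$ given $y$, and prefixes of $a$ (or of $p$) alone already sweep $\KS(a|y)$ over the whole range; what the shortest-description choice buys additionally is near-identity behaviour, whereas with prefixes of $a,b$ themselves you must, as you do, fall back on boundary (Poincar\'e--Miranda-type) conditions only, which is still enough for degree one. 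Two local statements in your sketch need repair, though neither is fatal given the hypotheses: with $b'=b$ sitting in the condition, the corner value is $\KS(a|b)+O(\log)=\KS(a)-I(a:b)+O(\log)$, not $\KS(a)+O(1)$; and the assertion in your combination step that touching the $b$-side prefixes shifts the first coordinate by only $O(\log)$ should read $I(a:b)+O(\log)$ (as you yourself state in the closing paragraph --- the heuristic decomposition $I(a:b)+I(a:u)+I(a:v)$ is not a valid identity, but the bound $I(a:b)+O(\log)$ is correct by the chain rule once $(u,v)$ is chosen jointly random relative to $(a,b)$), and this error is exactly what the $c\,(\log\KS(a)+\log\KS(b)+I(a:b))$ margins in the statement are there to absorb.
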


\begin{proof}
First of all, note that this statement is easy to prove if instead of $O(1)$-precision we are satisfied with $O(\log\KS(a)+\log\KS(b)+I(a:b))$-precision. Indeed, consider the shortest descriptions $p$ and $q$ for $a$ and $b$ and then let $y=(p',q')$ where $p'$ is $p$ without $\alpha$ last bits, $q'$ is $q$ without $\beta$ last bits.  The strings $p$ and $q$ are incompressible, so $\KS(p\cnd p')=\alpha$ and $\KS(q\cnd q')=\beta$ with  $O(\log\KS(a))$ and $O(\log\KS(b))$ precision (respectively). Note also that the information distance between $a$ and $p$ is $O(\log\KS(a))$,  the information distance between $b$ and $q$ is $O(\log\KS(b))$, so $\KS(a\cnd p')$ and $\KS(b\cnd q')$ are equal to $\alpha$ an $\beta$ with the same precision. What happens if we add $q'$ (respectively $p'$) to the condition and consider $\KS(a\cnd (p',q'))$ and $\KS(b\cnd(p',q'))$? Since $q'$ has logarithmic complexity given $q$ and therefore logarithmic complexity given $b$, adding $q'$ to the condition changes the complexity of $a$ at most by $O(\log\KS(b)+I(a:b))$; similar statement is true for $p'$, so we get the desired result.

To get $O(1)$-precision, we need to combine the simple construction above with a topological argument similar to the proof of Theorem~\ref{thm:distance}. Consider the shortest descriptions $p$ and $q$ for $a$ and $b$. Then $|p|=\KS(a)$ and $|q|=\KS(b)$. For every pair $(u,v)$ of integers such that $0\le u\le |p|$ and $0\le v\le |q|$ define $y(u,v)$ as follows:
    $$
 y(u,v)=(\text{$p$ without $u$ last bits},\ \text{$q$ without $v$ last bits}).
    $$
 As we have discussed, $\KS(a\cnd y(u,v))$ and $\KS(b\cnd y(u,v))$ are close to $u$ and $v$ respectively; the distance is $O(\log\KS(a)+\log\KS(b)+I(a:b))$.

In other terms, we consider the mapping
   $$
(u,v)\mapsto (\KS(a\cnd y(u,v)),\KS(b\cnd y(u,v)).
   $$
It is defined on the rectangle $[0,\KS(x)]\times[0,\KS(y)]$ and is close to the identity mapping; for each point $(u,v)$ the distance between this point  and its image is at most $O(\log\KS(a)+\log\KS(b)+I(a:b))$. This mapping is also continuous in the sense explained above. Now the topological argument (similar to the one used in Section~\ref{sec:vyugin-extension}) can be used to show that the image $O(1)$-covers the rectangle except for $O(\log\KS(a)+\log\KS(b)+I(a:b))$-neighborhood of its boundary. Indeed, the image of the boundary of the rectangle makes a turn around every point inside the rectangle (and not too close to the boundary), and therefore all these points are $O(1)$-close to the image of this mapping.
\end{proof}

\emph{Remark:} This argument can be generalized easily to three (or more) dimensions. For example, let us consider three strings $a,b,c$ that are almost independent. In this case we get a mapping of a three-dimensional box to itself which is ``continuous'' and is close to identity. Then a topological argument (based on the fact that identity mapping of the two-dimensional sphere $S^2$ is not homotopic to the constant mapping) shows that the image of this mapping covers (with $O(1)$-precision) the interior of the box.

\section{Combination with Muchnik's technique}

For the case when $a$ and $b$ are dependent, the result of Theorem~\ref{thm2} looks rather weak. We can extent the area of pairs $(\alpha,\beta)$ that can be covered, if we combine the topological technique with an argument based on Muchnik's result on  conditional descriptions~\cite{muchnik}. Let us recall first Muchnik's result.

\begin{prop}[Muchnik]\label{muchnik}
\textup{\textbf{(a)}}~Let $x$ and $y$ be arbitrary strings of length at most $n$. Then there exists
a string $p$ of length $\KS(x\cnd y)$ such that
\begin{itemize}
\item $\KS (p\cnd x) = O(\log n)$ and
\item $\KS (x\cnd p, y) = O(\log n)$.
\end{itemize}
\textup{\textbf{(b)}}~Consider a string $x$ and a family of strings $y_1,\ldots,y_m$; assume that the number of strings and their lengths are bounded by some $n$. Then there exists a string $p$ such that
\begin{itemize}
\item $\KS (p\cnd x) = O(\log n)$ and
\item for every $j=1,2,\ldots,m$ the complexity $\KS(x\cnd y_j,p_j)$ is $O(\log n)$, where $p_j$ is a prefix of $p$ having length $\KS(x\cnd y_j)$.
\end{itemize}

\noindent
As usual, the constants in $O(\log n)$-notation do not depend on $n$.
\end{prop}

\noindent
\emph{Remark:} We cannot get rid of logarithmic terms in this theorem, i.e., we cannot obtain $\KS (p\cnd x) = O(1)$ instead of $\KS (p\cnd x) = O(\log n)$, see \cite{vereshchagin}. However, in what follows we show that a combination of Muchnik's theorem with a topological argument can result in a statement with $O(1)$-precision.

\smallskip

The first part of the proposition says that the ``information difference'' between $x$ and $y$ (``information that is present in $x$ but is missing in $y$'') can be ``materialized'' as some string $p$. The second part of the proposition says that for several strings $y_1,\ldots,y_m$ one can choose the representatives of the information differences $x\setminus y_i$ to be prefixes of each other.

The string $p$ provided by the first part of this Proposition is incompressible (with logarithmic precision).  Since $p$ and $y$ together are enough to reconstruct the pair $(x,y)$ and the sum of complexities of $p$ and $y$ is $O(\log n)$-close to the complexity of this pair (due to Kolmogorov--Levin theorem about the complexity of a pair), the strings $p$ and $y$ are independent with logarithmic precision, i.e., $I(p:y)=O(\log n)$. In the second part the prefixes of $p$ (that are actually used) are also incompressible, and  $y_j$ has only $O(\log n)$ mutual information with the prefix of $p$ that has length $\KS(x\cnd y_j)$.

Now we use this result as a tool to improve Theorem~\ref{thm2} and get the following

\begin{thm}\label{thm3}
 For some constant $c$ the following statement holds: for every two strings $a,b$ of complexity at most $n$ and for every integers $\alpha,\beta$ such that
 \begin{itemize}
 \item $\alpha \le \KS(a) - c\log n$,
  \item $\beta \le \KS(b) - c\log n$,
   \item $-\KS(a\cnd b)+c\log n\le\beta-\alpha \le \KS(b\cnd a) - c\log n$,
 \end{itemize}
 there exists a string $y$ such that $|\KS(a\cnd y)-\alpha|\le c$ and $|\KS(b\cnd y)-\beta|\le c$.
\end{thm}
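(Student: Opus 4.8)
The plan is to run the topological argument of Theorem~\ref{thm2} once more, but to feed it descriptions of $a$ and $b$ that are \emph{structured} so that truncating them reveals information in a controlled order; the structure is provided by Muchnik's theorem on conditional descriptions. Recall that for any strings $u,v$ Muchnik's theorem produces a string $w=w(u\mid v)$ with $|w|=\KS(u\mid v)+O(\log)$, $\KS(u\mid v,w)=O(\log)$ and $\KS(w\mid u)=O(\log)$ --- an easily extractable short description of $u$ relative to $v$. First I would apply this to obtain $r=w(a\mid b)$ (of length $\KS(a\mid b)$) and $s=w(b\mid a)$ (of length $\KS(b\mid a)$). A few applications of the Kolmogorov--Levin chain rule to the three Muchnik inequalities show that $r$ is essentially independent of $b$, and even of the pair $(b,s)$, that $\KS(a\mid r)=I(a:b)+O(\log)$, and symmetrically for $s$.

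Next I would build a factored description of $a$: apply Muchnik once more to $a$ given $r$, getting $w$ with $|w|=\KS(a\mid r)+O(\log)=I(a:b)+O(\log)$ and $\KS(a\mid r,w)=O(\log)$. Then $\tilde p=(r,w)$ is a string of length $\KS(a)+O(\log)$ that determines $a$, whose first $|r|$ bits carry no information about $b$, while its last $|w|$ bits carry (given $r$) precisely the $I(a:b)$ bits of $a$ that correlate with $b$. Now set $y_A(i,l)=\bigl(\text{first $i$ bits of $\tilde p$},\ \text{first $l$ bits of $s$}\bigr)$ for $0\le i\le\KS(a)$, $0\le l\le\KS(b\mid a)$. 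Using the independence facts above together with the chain rule one obtains, up to an additive $O(\log)$,
\[
   \KS(a\mid y_A(i,l))=\KS(a)-i,\qquad
   \KS(b\mid y_A(i,l))=\KS(b)-\max(0,\,i-\KS(a\mid b))-l.
\]
This map is ``continuous'' in the sense of Theorem~\ref{thm2} (changing $i$ or $l$ by one moves the image by $O(1)$), and one checks that the boundary of the parameter rectangle is carried --- with $O(\log)$ precision, once around --- along the boundary of a region $H_A$ that contains the whole part of the target set with $\beta\ge\alpha$ (and also the rectangle $[I(a:b),\KS(a)]\times[I(a:b),\KS(b)]$). Hence the discrete topological (winding-number) argument used in the proofs of Theorems~\ref{thm:distance} and~\ref{thm2} shows that the image of $y_A$ covers the $O(\log)$-interior of $H_A$ with $O(1)$ precision. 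The construction symmetric in $a$ and $b$ takes care of the part of the target set with $\alpha\ge\beta$.

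It remains to reach the points lying in the interior of the target set but on the common boundary $\alpha=\beta$ of these two regions, together with a few points near the low corners. For the diagonal I would use the one-dimensional version of the argument (as in Section~1) applied to the prefixes of the single string $P=r\cdot s\cdot w$, which has length $\KS(a,b)+O(\log)$ and determines $(a,b)$: its image runs, with $O(1)$ precision, from $(\KS(a),\KS(b))$ to $(I(a:b),\KS(b))$, then to $(I(a:b),I(a:b))$, and then straight down the diagonal to $(O(\log),O(\log))$, hence hits every $(t,t)$ with $t$ in the interior of $[0,I(a:b)]$. The remaining points --- those within $O(\log)$ of a low corner --- are handled directly, by letting $y$ contain an exact shortest description of $a$, of $b$, or of $(a,b)$ (so that one or both conditional complexities drop to $O(1)$) together with a prefix of the relevant Muchnik string and a one-dimensional sweep; one verifies that the hypotheses $\alpha\le\KS(a)-c\log n$, $\beta\le\KS(b)-c\log n$, $-\KS(a\mid b)+c\log n\le\beta-\alpha\le\KS(b\mid a)-c\log n$ leave exactly enough room for one of these cases to apply.

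The main obstacle is the middle step: proving the two displayed formulas for $\KS(a\mid y_A)$ and $\KS(b\mid y_A)$ with only logarithmic error. The reason a naive choice fails is exactly that the prefixes of an \emph{arbitrary} shortest description of $a$ reveal its $b$-correlated part in an uncontrolled order, so the second formula (the one with the $\max$) would be false; Muchnik's theorem is invoked precisely to push the mutual information into the tail of the description, and checking that $\tilde p=(r,w)$ really has the claimed independence and ``linearity under truncation'' properties --- via the three Muchnik inequalities for $r$, $s$, $w$ and the Kolmogorov--Levin chain rule --- is the technical heart of the argument. Once the formulas are in place, the topological and bookkeeping steps are the same routine as in Theorem~\ref{thm2}.
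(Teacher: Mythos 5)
Your overall strategy (use Muchnik's conditional-description theorem to restructure the shortest descriptions so that the mutual information sits in a controlled place, then run the winding-number argument) is the right one and matches the paper in spirit, and your displayed formulas for $\KS(a\mid y_A(i,l))$ and $\KS(b\mid y_A(i,l))$ are indeed provable with $O(\log)$ error from the standard Muchnik properties plus symmetry of information. The genuine gap is in how you handle the seam between your two charts. The region $H_A$ covered by $y_A$ has the diagonal segment $\{(t,t): 0\le t\le I(a:b)\}$ on its \emph{boundary} (it is the image of the parameter edge $l=\KS(b\mid a)$, $i\ge\KS(a\mid b)$), and the symmetric chart $H_B$ has the same segment on its boundary. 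The winding argument only yields $O(1)$-coverage of points that are $\Omega(\log n)$-deep inside the image region, so admissible targets with $|\beta-\alpha|=O(\log n)$ and $\alpha\lesssim I(a:b)$ --- which do satisfy the hypotheses of the theorem whenever $\KS(a\mid b),\KS(b\mid a)\ge c\log n$ and $I(a:b)\gg\log n$ --- are covered by neither chart. Your proposed patch, a one-dimensional sweep along prefixes of $P=r\cdot s\cdot w$, cannot close this gap: the one-dimensional threshold argument pins down only \emph{one} coordinate to $O(1)$ (say $\KS(a\mid y)$ crosses $t$), while the other coordinate is controlled only through the formula $\KS(b\mid y)=t+O(\log n)$, i.e.\ with logarithmic precision. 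Getting both coordinates simultaneously within $O(1)$ is exactly the difficulty the two-dimensional topological argument exists to overcome, so the diagonal strip (and similarly your vaguely described ``low corner'' cases away from the trivially achievable points) is left unproved.

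The paper avoids the seam altogether by a stronger use of Muchnik's theorem: after building $a'$ (your $\tilde p$, with the $b$-independent part in front), it applies Muchnik's theorem once more to describe $b$ \emph{conditional on all prefixes} $a'_0,a'_1,\dots,a'_{|a'|}$ simultaneously, obtaining a single $b'$ with $\KS(b\mid a'_l,b'_m)=O(\log n)$ for every pair with $l+m=\KS(a,b)$ (equivalently, $I(a'_l:b'_m)=\max\{0,l+m-\KS(a,b)\}+O(\log n)$ for all $l,m$). With this one map $(l,m)\mapsto(\KS(a\mid a'_l,b'_m),\KS(b\mid a'_l,b'_m))$ the boundary of a single parameter region traces, with $O(\log n)$ precision, the boundary of the full hexagon $\{0\le\alpha\le\KS(a),\,0\le\beta\le\KS(b),\,-\KS(a\mid b)\le\beta-\alpha\le\KS(b\mid a)\}$, so every admissible $(\alpha,\beta)$ --- including the diagonal ones --- lies $\Omega(\log n)$-deep in the interior and is reached by one winding argument, with no case analysis. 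To repair your proof you would need this simultaneous conditioning (or some other single chart whose image region contains a $\Omega(\log n)$-neighborhood of the diagonal), not a one-dimensional sweep.
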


\emph{Remark}. The conditions for $\alpha$ and $\beta$ say that the point $(\alpha,\beta)$ is inside the hexagon shown by Figure~\ref{fig-hexagon} below.

\begin{proof}

In the argument above we considered the shortest descriptions $p$ and $q$ for $a$ and $b$, i.e., $p$ and $q$ were incompressible strings that contain the same information as $a$ and $b$ respectively. Now we need to choose these two strings $p$ and $q$ in a special way: their prefixes should be independent as much as possible and become dependent only if the total length of the two prefixes exceeds $\KS(a,b)$. This can be done using Muchnik's result, and then we can use $p$ and $q$ in the same way as before.

We start by saying precisely what are the required properties of $p$ and $q$.

\begin{lem}\label{lemma}
For all strings $a$ and $b$ of complexity at most $n$ there exist strings $p$ and $q$ such that
\begin{itemize}
\item $|p|=\KS(a)$;
\item $|q|=\KS(b)$;
\item the information distance between $p$ and $a$ is $O(\log n)$;
\item the information distance between $q$ and $b$ is $O(\log n)$;
\item if $l\le |p|$, $m\le |q|$ and $l+m=\KS(a,b)$, then $I(p_l : q_m ) = O(\log n)$ and
the information distance between $(a,b)$ and $(p_l,q_m)$ is $O(\log n)$.
\end{itemize}
where $p_l$, $q_m$ denote  prefixes of strings $p$ and $q$ of length $l$
and $m$ respectively, and information distance between $u$ and $v$ is defined as $\KS(u\cnd v)+\KS(v\cnd u)$.
\end{lem}

One can say that $p$ and $q$ present the same information as $a$ and $b$ in such a way that prefixes of $p$ and prefixes of $q$ are independent as much as possible --- up to the point where they become too long to be independent, since the total amount of information is only $\KS(a,b)$. In particular, if $l=\KS(a\cnd b)$, then the last requirement says that $p_l$ is independent with $b$ (have logarithmic mutual information); all shorter prefixes of $p$ are independent with $b$, too. One could note also that the information distance requirement is the consequence of independence: $p_l$ and $q_m$ are simple given $(a,b)$, and if $l+m=\KS(a,b)$, they together contains as much information as the pair $(a,b)$ itself.

A technical remark: we use the exact equalities for lengths, not $O(\log n)$-precision ones, as for the complexities. This does not really matter, since adding or deleting $O(\log n)$ bits changes all the complexities and information quantities by $O(\log n)$ only.

\begin{proof}[Proof of the lemma:]

Though $a$ and $b$ appear in the same way, our proof is non-symmetric: first we find one of the strings $p$ and $q$ (say, $p$) and then construct $q$ using $p$.

First, we apply the Proposition~\ref{muchnik} to get a string $p$ that starts with the ``difference'' $a\setminus b$ and then presents the rest of~$a$ . This means that
\begin{itemize}
\item $|p|=\KS(a)$;
\item $\KS(p\cnd a)=O(\log n)$;
\item $\KS(a\cnd p)=O(\log n)$;
\item $\KS(a\cnd b,p')=O(\log n)$, where $p'$ is the prefix of $p$ of length $\KS(a\cnd b)$.
\end{itemize}
For that we can apply the part (b) of this proposition to two strings: $b$ and the empty string. Another option is to apply the part (a) to get the prefix of $p$ and then add the conditional description of $a$ given this prefix.

As we have mentioned, the strings $p'$ and $b$ have almost no mutual information, i.e., their mutual information is $O(\log n)$. The same is therefore true for all shorter prefixes of $p$: they also have almost no mutual information with $b$.

Then, we apply Proposition~\ref{muchnik}, part (b), to construct a
description of $b$ using all prefixes of $p$ as $y_1,\ldots,y_m$. We get a string $q$ such that
\begin{itemize}
\item $|q|=\KS(b)$;
\item $\KS(q\cnd b)=O(\log n)$;
\item $\KS(b\cnd q)=O(\log n)$;
\item $\KS(b\cnd p_l, q_m)=O(\log n)$ for all $l\le |p|$ and $m\le |q|$ such that $l+m =\KS(a, b)$.
\end{itemize}
Here $p_l$ and $q_m$ are prefixes of $p$ and $q$ of lengths $l$ and $m$ respectively; to see why the last requirement is provided by Proposition~\ref{muchnik}, we note that $\KS(b\cnd p_l)$ does not exceed $m=\KS(a,b)-l$, because $\KS(b\cnd p)$ is equal to $\KS(b\cnd a)$ and only the missing bits of $p$ should be reconstructed:
     $$
\KS(b\cnd p_l)\le \KS(b\cnd p)+(|p|-l)=[\KS(b\cnd p)+|p|]-l=\KS(a,b)-l=m
     $$
(all the equalities and inequalities are understood with $O(\log n)$ precision). It is now easy to see that the mutual information $I(p_l:q_m)$ for all $l$ and $m$ such that $l+m=\KS(a,b)$ is $O(\log n)$, otherwise $p_l$ and $q_m$ would contain not enough information to reconstruct $a$ and $b$.
\end{proof}

Now we are ready to finish the proof of Theorem~\ref{thm3}, combining the argument used for Theorem~\ref{thm2} and Lemma~\ref{lemma}.
Let $p$ and $q$ be the strings provided by this Lemma.
For every pair $(l,m)$ of integers such that $0\le l\le |p|$ and $0\le m\le |q|$ we define
    $$
 y(l,m)=(\text{first $l$ bits of $p$},\ \text{first $m$ bits of $q$}).
    $$
Similarly to the proof of Theorem~\ref{thm2}, we consider the mapping
   $$
(l,m)\mapsto (\KS(a\cnd y(l,m)),\KS(b\cnd y(l,m)),
   $$
defined on the rectangle $[0,\KS(a)]\times[0,\KS(b)]$. (Recall that $p$ and $q$ are incompressible versions of $a$ and $b$ and therefore have lengths $\KS(a)$ and $\KS(b)$ respectively; we ignore the $O(\log n)$-difference for simplicity.)

 \begin{figure}[ht]
\begin{center}
\ifpdf
 \includegraphics[width=1.0\textwidth]{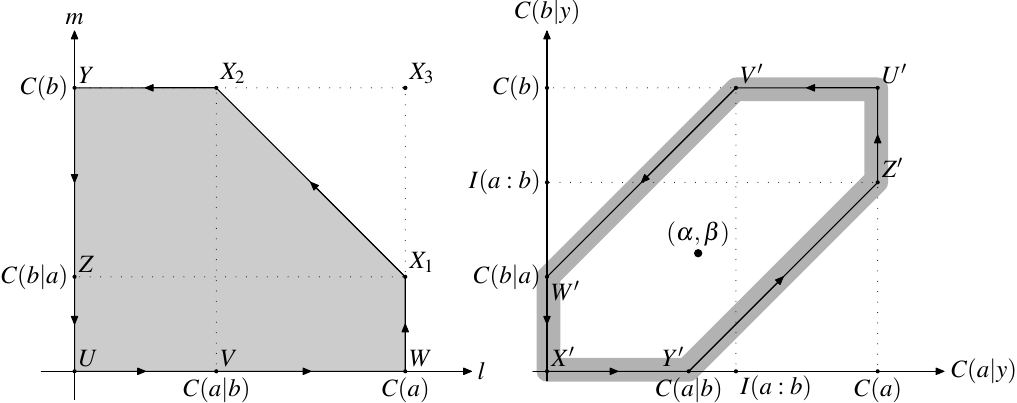}
\else
 \includegraphics[width=1.0\textwidth]{topology-2.mps}
\fi
\end{center}
\caption{Each pair of integers on the left determines $y=(a'_l,b'_m)$, which is mapped to a pair $(\KS(a\cnd y),\KS(b\cnd y))$ on the right.}\label{fig-hexagon}
\end{figure}

This mapping is ``continuous''  (neighbor points are mapped into points at distance $O(1)$). Consider the path $U$--$V$--$W$--$X_1$--$X_2$--$Y$--$Z$--$U$ that goes counterclockwise around the pentagon on Fig.~\ref{fig-hexagon} (left).
We will show that the image path  will (with logarithmic precision) go clockwise around the hexagon
$U'$--$V'$--$W'$--$X'$--$Y'$--$Z'$--$U'$ on Fig.~\ref{fig-hexagon} (right).  This path
makes a turn around the point $(\alpha,\beta)$. Hence, point $(\alpha,\beta)$ (more precisely,
some point in $O(1)$-neighborhood of  $(\alpha,\beta)$)
has a preimage $(l,m)$ in  the rectangle $[0,\KS(a)]\times[0,\KS(b)]$.

It remains to look closely at the path around the pentagon and its image.

\begin{itemize}

\item $m=0$ and $l=0\ldots \KS(a\cnd b)$: the image goes along $U'$--$V'$ with $O(\log n)$-precision  (recall that prefixes of $p$ up to length $\KS(a\cnd b)$ are independent with $b$ with the same precision; all the statements in this argument are understood in the same way);

\item $m=0$ and $l=\KS(a\cnd b)\ldots \KS(a)$: the image goes along $V'$--$W'$ with $O(\log n)$-precision; now the increase in $p_l$ means the same decrease in $\KS(b\cnd p_l)$. Indeed, for $m=\KS(a,b)-l$ we have
    $$\KS(b\cnd p_l)=\KS(b,p_l\cnd p_l)=\KS(a,b\cnd p_l)=|q_m|=m.$$
Indeed, since $l\ge \KS(a\cnd b)$, then $p_l$ and $b$ together are enough to reconstruct $a$, so the second equality is true. The third one holds because $p_l$ and $q_m$ are independent and $(a,b)$ is informationally close to $(p_l,q_m)$.

\item $l=\KS(a)$ and $m=0\ldots \KS(b\cnd a)$: the image goes along $W'$--$X'$ with $O(\log n)$-precision. Indeed, $y(l,m)$ now has full information about $a$; to compute $\KS(b\cnd y(l,m))$ we note that $y(k,l)$ is close to $(a,q_m)$ and $\KS(b\cnd a,q_m)=\KS(a,b\cnd a,q_m)=\KS(a,b)-\KS(a,q_m)$; recall that $q_m$ is simple given $b$ and independent with $a$.

\item $l=\KS(a)-\lambda$ and $m=\KS(b\cnd a)+\lambda$ for $\lambda=0\ldots I(a:b)$:
the image remains  in $O(\log n)$-neighborhood of $X'$. Indeed, we use prefixes $p_l$ and $q_m$ with $l+m=\KS(a,b)$, and these prefixes are enough to reconstruct the pair $(a,b)$. Note also that $\KS(a)+\KS(b\cnd a)$ and $\KS(b)+\KS(a\cnd b)$ are both equal to $\KS(a,b)$, so the slope of the line is $-1$.

The rest of the path is symmetric:

\item $m=\KS(b)$ and $l=\KS(a\cnd b)\ldots 0$: the image goes along $X'$--$Y'$ with $O(\log n)$-precision;
\item $l=0$ and $m=\KS(b)\ldots \KS(b\cnd a)$: the image goes along $Y'$--$Z'$ with $O(\log n)$-precision;
\item $l=0$ and $m=\KS(b\cnd a)\ldots 0$: the image goes along $Z'$--$U'$ with $O(\log n)$-precision;
\end{itemize}

Thus, the path on in the image follows the trajectory $U'$--$V'$--$W'$--$X'$--$Y'$--$Z'$--$U'$
with $O(\log n)$-precision. Therefore it turns around the point $(\alpha,\beta)$ if this point is $O(\log n)$-far from the boundary of the hexagon, and this is exactly what our assumption guarantees.
 \end{proof}

 \emph{Remark:}  Instead of the path $U$--$W$--$X_1$--$X_2$--$Y$--$U$
 we could take another path  $U$--$W$--$X_3$--$Y$--$U$.
 The shortcut $X_1$--$X_2$ is equivalent to the longer path $X_1$--$X_3$--$X_2$ since
all points of the triangle $X_1 X_2 X_3$ are mapped into
$O(\log n)$-neighborhood of $X'$ (if $l+m\ge \KS(a,b)$, then $y(l,m)$ contains
enough information to reconstruct both $a$ and $b$).

From Theorem~\ref{thm3} it follows in particular that  if $\KS(a\cnd b)$ and $\KS(b\cnd a)$ are \emph{not} logarithmically negligible, one can cut by factor $2$ the complexities of $a$ and $b$ by adding a condition. The problem for the case when $\KS(a\cnd b)$ and $\KS(b\cnd a)$ are both small is that the hexagon degenerates into a line, and we cannot apply our technique. In this case it is easy to cut the complexities by factor $2$ with logarithmic precision, just by doing this for one string and noting that the other one is close to the first one, but we do not know whether the $O(1)$-result is true. 

 \section{The $3$-dimensional case}

We have already briefly mentioned a $3$-dimensional version of Theorem~\ref{thm2} at the end of Section~\ref{sec:decreasing}. This result also can be improved using Muchnik's technique. In this section we explain this argument in some detail. Let us mention that Theorem~\ref{thm4} below looks not very impressive; we believe that a somewhat stronger version of this theorem can be proven with a similar technique (but now we cannot even formulate an exact statement as a conjecture).

\begin{thm}\label{thm4}
 For some constant $c$ the following statement holds: for every three strings $x,y,z$ of length at most $n$ and for every integers $\alpha,\beta,\gamma$ such that
 \begin{itemize}
 \item $I(x:y,z) + c\log n\le \alpha \le \KS(x) - c\log n$,
  \item $I(y:x,z) + c\log n\le\beta \le \KS(y) - c\log n$,
  \item $I(z:x,y) + c\log n\le\gamma \le \KS(z) - c\log n$,
 \end{itemize}
there exists a string $s$ such that $|\KS(x\cnd s)-\alpha|\le c$,  $|\KS(y\cnd s)-\beta|\le c$, and $|\KS(z\cnd s)-\gamma|\le c$.
\end{thm}
\noindent
\emph{Remark:}
The difference between Theorem~\ref{thm4} and the mentioned Remark at the end of Section~\ref{sec:decreasing} can be described as follows. In Section~\ref{sec:decreasing} we assumed that three strings $x,y,z$ are almost independent: if all the dependencies are bounded by some number $d$, then the result can be obtained for points that are $O(d)$-inside the box of size $\KS(x)\times\KS(y)\times\KS(z)$. Now we look at the situation more closely and note that the dependence  between $x,y,z$ do not make the general statement completely false but only shifts  the lower bounds for $\alpha,\beta,\gamma$. We prove this statement using Muchnik's technique.

\begin{proof}
First, we apply Muchnik's theorem on conditional descriptions and then improve the precision using a topological argument. Proposition~\ref{muchnik} provides a string $x'$ of size and complexity $\KS(x\cnd y,z)+O(\log n)$ that is simple given $x$, i.e., $\KS(x'\cnd x)=O(\log n)$, and independent with the pair $(y,z)$, i.e.,  $I(x':y,z)=O(\log n)$. This $x'$ is a ``difference'' $x\setminus (y,z)$. In a similar way we construct strings $y'=y\setminus(x,z)$ and $z'=z\setminus(x,y)$. It is easy to see that the triple $x',y',z'$ is independent (the complexity of the triple is equal to the sum of complexities) and the same is true for all prefixes (simple algorithmic transformations, like taking a prefix, can only decrease mutual information).

Now for each triple $(k,l,m)$ with
$$0\le k\le\KS(x\cnd y,z), \ 0\le l \le \KS(y\cnd x,z), \  0\le m\le \KS(z\cnd x,y)$$
we consider the prefixes $x'_k$, $y'_l$, $z'_m$ of strings $x$, $y$, and $z$ having lengths $k$, $l$, and $m$ respectively, the triple of strings
$$ s(k,l,m)=(x'_k, y'_l, z'_m) $$
and the triple of numbers
$$F(k,l,m)=\bigl(\KS(x)-\KS(x\cnd s(k,l,m)), \KS(y)-\KS(y\cnd s(k,l,m)), \KS(z)-\KS(z\cnd s(k,l,m))\bigr).$$

The construction of $x'$, $y'$ and $z'$ guarantees that  $F(k,l,m)$ is $O(\log n)$-close to $(k,l,m)$ for all triples $(k,l,m)$ from the domain of $F$ (described above). Let us check this property. The situation is symmetric, so we consider $x$ and have to show that
$$ \KS(x\cnd x'_k,y'_l,z'_m)=\KS(x)-k.$$
Intuitively it looks plausible, since $x'_k$ contains $k$ bits of information from $x$ while $y'_l$ and $z'_m$ are prefixes of strings $y'$ and $z'$ that are independent with $x$. The formal argument goes as follows. Using $x'_k$ as a condition, we decrease the complexity of $x$ by $k$ (since $x'_k$ is simple given $x$ and is incompressible). Adding $y'_l$ to the condition, we do not change the complexity, since $y'$ is independent with $x$ (and therefore with $x'_k$, since $x'_k$ is simple given $x$). And adding a condition that is independent with all other strings, we do not change the complexities. Finally, we have to use this observation once more: we add condition $z'_m$ that is independent with the pair $(x,y)$ and therefore also with $x'_k$ and $y'_l$.

Thus, the mapping $(k,l,m)\mapsto F(k,l,m)$
is defined on the  parallelepiped
 $$
 [0,\KS(x\cnd y,z)]\times[0,\KS(y\cnd x,z)]\times[0,\KS(z\cnd x,y)]
 $$
and is $O(\log n)$-close to the identity mapping. Also $F(k,l,m)$ is continuous: changing $k$, $l$ or $m$ by $1$ we change the value of the function by $O(1)$. Now we can apply a $3$-dimensional topological argument and conclude that
every point in the $O(\log n)$-interior of the parallelepiped is $O(1)$-close to some value $F(k,l,m)$.
Hence, for every triple $(\alpha,\beta,\gamma)$ satisfying the conditions of the theorem there exists a triple $(k,l,m)$ such that for $s=s(k,l,m)$ we have
$\KS(x\cnd s)=\alpha+O(1)$,  $\KS(y\cnd s)=\beta+O(1)$, and $\KS(z\cnd s)=\gamma+O(1)$.
\end{proof}

\begin{center} * \quad * \quad * \end{center}

\textbf{Acknowledgements}. The authors are grateful to Laurent Bienvenu who suggested to write down this simple argument,  Tarik Kaced, and all the colleagues in Escape/NAFIT/Kolmogorov seminar team. We thank the (anonymous) referee for a very detailed review that pointed out many inaccuracies and suggested several improvements.

\end{document}